%% file: main.tex
\documentclass[conference]{IEEEtran}

\usepackage[letterpaper,
            left=1.02in,
            right=1.02in,
            top=0.75in,
            bottom=1.04in,
            footskip=.25in]{geometry}

\IEEEoverridecommandlockouts

\usepackage{cite}
\usepackage{amsmath,amssymb,amsfonts}
\usepackage{algorithmic}
\usepackage{graphicx}
\usepackage{textcomp}
\usepackage{xcolor}
\def\BibTeX{{\rm B\kern-.05em{\sc i\kern-.025em b}\kern-.08emT\kern-.1667em\lower.7ex\hbox{E}\kern-.125emX}}

\usepackage{enumitem}

\usepackage{graphicx}
\usepackage{algorithm2e}
\RestyleAlgo{ruled}
\usepackage{amsthm}
\usepackage{url}
\usepackage{subcaption}

\newcommand{\ignore}[1]{}
\newcommand{\shir}[1]{\textit{\textcolor{blue}{[Shir]: #1}}} 
 
\newcommand{\red}[1]{\textcolor{red}{#1}}

\newcommand{\sysName}{NODE }
\newcommand{\sysNameNoSp}{NODE}

\title{NODE: Network Wide Top-K Flows in the Data Plane}

\author{\IEEEauthorblockN{Eitan Stein}
\IEEEauthorblockA{\textit{The Open University of Israel} \\
Israel }
\and
\IEEEauthorblockN{Lior Zeno}
\IEEEauthorblockA{\textit{Technion} \\
Israel }
\and
\IEEEauthorblockN{Shir Landau Feibish}
\IEEEauthorblockA{\textit{University of Haifa} \\
Israel }
}

\usepackage{graphicx}
\usepackage{algorithm2e}
\RestyleAlgo{ruled}
\usepackage{amsthm}
\usepackage{url}
\usepackage{subcaption}

\newtheorem{theorem}{Theorem}[section]
\newtheorem{lemma}[theorem]{Lemma}

\renewcommand{\thesection}{\arabic{section}}

\begin{document}

\maketitle
\input{AbstractSection}

\input{Intro}
\input{Background}

\input{Challenges}

\input{Algorithm}

\input{networkwide}
\input{Evaluation}
\input{Conclusion}
\input{Acknowledgments}

\bibliographystyle{plain}
\bibliography{references}

\end{document}

%% file: AbstractSection.tex
\begin{abstract}
    Monitoring network traffic is crucial  for most network tasks, such as,   identifying and blocking attacks, pinpointing failures and engineering and rerouting heavy traffic to maintain high throughput. One important metric when monitoring the traffic is finding the top-k heavy flows, that is the k \emph{heaviest} flows in the traffic.
    Programmable networks allow performing advanced network analysis right in the data plane. In recent years, various solutions have been proposed for efficiently finding the top-k heavy flows within a single switch.  However, at times we may need to find the \emph{global} top-k flows. 
    Existing solutions for global top-k detection use a centralized controller that collects and aggregates the measurements performed in each of the switches.  
    Yet, the process of sending information to the control plane 
    and then having the controller send back the information to the switches can be very lengthy. 
    In order to be able to detect and mitigate short-lived events, 
    solutions that work completely within the data plane are needed. 
   In this paper we present \sysNameNoSp, a network-wide top-k detection algorithm that operates exclusively in the data plane. \sysName allows the switches to aggregate information from all other switches in the network, and ensures that eventually all switches hold an \emph{identical} global top-k table. We show that \sysName manages to detect global top-k flows on both synthetic and real traces, with a recall rate of over 95\% while using less than 300KB per switch. 
\end{abstract}

%% file: Intro.tex
\section{Introduction}

\ignore{
\shir{
This is the basic structure for intro (more here:
}
\shir{
https://www.cs.columbia.edu/~hgs/etc/intro-style.html 
\begin{itemize}
    \item para. 1: motivation: broadly, what is problem area, why important?
    \item para. 2: narrow down: what is problem you specifically consider
    \item para. 3: “In the paper, we ….”: most crucial paragraph, tell your elevator pitch
    \item para. 4: how different/better/relates to other work
    \item para. 5: “The remainder of this paper is structured as follows”
\end{itemize}
}
}

Networks require real-time monitoring for a wide range of uses such as traffic engineering and identifying failures and threats.  \cite{MicroTE_traffic_engineering, 
robustDistribMonitorTraffic, bellIOTSecurity, trafficEngInSDN, TCPincast, netwideRoutingObliviousHH}. 
One of the common monitoring tasks is to identify
the top-k flows. This can be useful in routing decisions, for example, as improved routing of heavy flows may alleviate congestion and increase network throughput \cite{robustDistribMonitorTraffic, loadSensitiveRouting}
. Identifying the top-k flows can also assist in detecting and mitigating volumetric attacks such as DDoS \cite{jaqen}. 
Furthermore, with the advancement of programmable networks, many solutions have been developed for performing network monitoring right inside the data plane. Finding top-k flows in a programmable switch is challenging due to the harsh constraints on memory and processing,
yet, there are existing solutions for finding top-k flows \cite{PrecisionJournal} that can be deployed on a programmable switch. However, these solutions are only suitable for a single switch, whereas some tasks require finding the \emph{network-wide} top-k flows. 

In existing solutions, in order to find network-wide top-k flows, each switch processes the traffic that it sees, and then additional processing needs to be performed \emph{collectively}. Some solutions use samples that are either collected in the switch~\cite{Aroma} or sent to the collector individually~\cite{netflow,sflow}. Due to the high overhead incurred by collecting these samples, often only a limited amount of packets are sampled (e.g. 1 in 30K~\cite{facebookSampling}), which reduces the accuracy of the measurement.
In other solutions, in order to avoid sampling, some processing is performed in the data plane to identify the locally heavy flows~\cite{Carpe, adaptiveThresholds, IncrementalDeployment, flowradar}. The processed information is then sent to a centralized controller that aggregates the data, and then passes the information of the network-wide top-k flows back to the switches as needed. This too is not optimal, as interaction with the controller may take a long time and thus is not suitable for time-sensitive measurements, such as short lived bursts. Furthermore, having each switch calculate the top-k flows locally can miss network wide heavy flows that spread across the network and are `small' in some or all individual switches. Looking for such smaller flows locally will incur significant overhead in either switch resources or communication or both.

In order to find top-k flows only within the data plane, switches need a mechanism for sharing information. 
Swish~\cite{swish} is a shared state management system in the data plane, it can be used to maintain 
a distributed Count-Min Sketch (CMS)~\cite{CMS}. Each switch maintains a local CMS, and the Swish framework distributes the information from each local CMS to all other switches. Each switch then merges the sketches locally to get a combined network-wide CMS. In this manner 
Swish enables 
finding network-wide heavy hitter flows completely in the data plane. 
Furthermore, Swish\cite{swish} achieves better speed by an order of magnitude compared to using a centralized controller to obtain the combined sketches.
However, this mechanism is not suited for finding \emph{top-k} flows. CMS only maintains counters, and not flow IDs. Furthermore, flow IDs are required to extract the flow count estimation from the sketch. 
Thus a different solution is needed for finding network-wide top-k flows.   

 \textbf{The \sysName framework.} 
We present \sysName (Network wide tOp-k in the Data planE) a system for finding network-wide top-k flows completely in the data plane. 
\sysName maintains the local top-k flows in each switch and then both distributes and aggregates them in the data plane, without controller assistance. 
 %
 More specifically, in order to find the network-wide top-k flows, \sysName performs three main tasks: 1) \sysName creates a local top-k table in each switch independently; 2) \sysName builds on Swish to share the local top-k tables from each of the switches among all the other switches, in order to find the global counters of \emph{local} flows. 3) In each switch, \sysName  merges the information found in \emph{all} of the local top-k tables, to eventually form a global top-k table in every switch in the network;  
\sysName performs the entire process completely within the data plane and does not require a central controller or collector to aggregate the information.
%
%
Furthermore, we implement \sysName on 
a simulated testbed and in P4 code for the Intel Tofino Wedge-100 programmable switch~\cite{tofino}, and evaluate the effect of various parameters on \sysNameNoSp's performance, as well as compare it to controller based solutions.

In the following, we provide the background
\S~\ref{sec:background}, 
in \S~\ref{sec:challenges} we list the challenges of merging network-wide data in the data plane, and \S\ref{sec:system} describes the \sysName framework. 
Finally, we show results of \sysNameNoSp's evaluation (\S~\ref{sec:evaluation}) and finish with a conclusion (\S~\ref{sec:conclusion}).

%% file: Background.tex
\section{Background and Related Work}\label{sec:background}

We provide the background needed 
for understanding \sysNameNoSp's design and functionality.
We 
describe the processing restrictions of the data plane, and give a short overview of the existing solutions for finding top-k flows both in and out of the data plane, as well as methods for information sharing 
in the data plane.

\subsection{Data Plane Restrictions}
For network hardware to process packets at line rate (i.e., Tbps speed), they must impose harsh restrictions on both the memory resources and the computation done in packet processing. Programmable switches (such as PISA~\cite{RMT}) 
have a 
 feed-forward packet processing pipeline comprised of a sequence of a small number of stages. Each stage has its own limited amount of unique memory, to enable stateful computations. Access to this memory (read/write operation) must be done when the packet is being processed by each stage. 
 For example: a packet cannot access the memory of stage $i$ while being processed in a different stage $j$. Additionally, the amount of memory that can be accessed in any stage while processing the packet is very limited. 
 If additional processing on the packet is required, the packet may be recirculated to the beginning of the pipeline, however performing too many recirculations can affect throughput as it requires the switch to processes the packet again.

\subsection{Finding Top-K Flows in the Data Plane} 
There have been various solutions proposed for finding top-k flows and heavy hitters.  
We describe some of these works as they will help to understand the choices made in designing \sysNameNoSp.

\textbf{Space Saving. } Space Saving~\cite{SpaceSaving} is a well known technique that maintains a subset of the items in the stream and a counter for each member of the subset. When a new packet with ID $x$ is processed, if $x$ is in the subset, its associated counter is incremented by $1$. If $x$ is not saved in the subset, the algorithm finds the member with the \emph{smallest} counter (i.e. $MinCount$) in the subset and replaces that member with $x$ and increments its counter by $1$. Although Space Saving succeeds in finding heavy flows (if their frequency is high enough compared to the size of the stream), it may over-estimate the frequency. This is especially significant in small flows that may be given a very high frequency estimation.  

\textbf{RAP. } Random Admission Policy ~\cite{RAP} expands on the same idea but, upon seeing an ID that is not in the subset, instead of \emph{always} replacing the entry with the lowest counter, it will only replace it with a certain probability (the exact probability is \(1/({MinCount} + 1)\)). This cancels some of the noise (excess counter increases) by low frequency flows and thus reduces the flow count overestimation.

\textbf{HashPipe. } Both Space Saving and RAP find top-k flows but they cannot work within the restrictions of the data plane. The main obstacle is obtaining the smallest counter. In order to find this counter, we must sort the items or access all of the counters to find the minimum.  This would require many more memory accesses than are available. Furthermore, after finding the minimum counter we would need to access the proper stage where it is stored to change the corresponding ID.  
HashPipe ~\cite{HashPipe} was the first solution for finding top-k flows in the data plane. 
To avoid looking for the minimum across all counters, 
HashPipe uses a table that is divided into $d$ separate vectors, each with its own hash table, placed in separate stages, which are used to create a \emph{rolling minimum}. 
For each vector, the packet will be hashed into a single location based on the packet ID. If the ID in the vector matches the packet's ID, the counter in the vector is incremented. Each packet will also maintain an additional ID and counter in packet metadata, which will be used to maintain the minimum. That is, if the IDs in the packet and vector \emph{don't} match, the counters are compared. If the counter in the packet is larger than the counter in the vector, the algorithm replaces the values in the vector with the ID and counter carried by the  packet and continues processing the packet with the ID of the smaller flow and its counter, so that the minimum values will eventually 'roll' out of the table.



HashPipe's algorithm suffers from one main issue. 
Precision~\cite{PrecisionJournal, Precision2018} shows that HashPipe does not meet the requirements to run on programmable switches. Since it requires both accessing the ID before the counter to decide whether to increment the counter, but also requires to access counters before IDs to compare the counters to decide whether to switch the packet ID and the saved ID.

\textbf{Precision. } In order to create a top-k algorithm that can run on programmable switches, Precision~\cite{PrecisionJournal, Precision2018} builds on both RAP and HashPipe by using a similar structure to HashPipe's table while handling packets similarly to RAP. When each packet reaches the hash table in the $j$-th vector, it is hashed (just like in HashPipe) to check if its ID matches the ID in the vector. 
If the IDs match, the associated counter is incremented. If a match is not found in any of the vectors it will recirculate to replace the smallest observed counter with probability \(1/({MinCount} + 1)\). This way there are no excessive recirculations. Note that Precision can run on programmable switches since it always first checks the ID and only then handles the counter.

\textbf{Controller based network-wide top-k.} Several solutions have been proposed for finding network-wide top-k flows~\cite{IncrementalDeployment, adaptiveThresholds, Carpe, Aroma}. Some of these solutions include data plane analytics in programmable switch networks which can be used for finding top-k flows, 
however these solutions depend on either pulling or pushing data to a centralized controller which builds the network-wide top-k list and sends this list to all switches. We wish to find network-wide top-k completely in the data plane in order to create the table faster as seen in the comparison between Swish and a centralized controller \cite{swish}.  
%
FlowRadar~\cite{flowradar} is a known controller based algorithm for finding network-wide heavy hitters. 
FlowRadar maintains a Bloom filter \cite{bloom1970} and an array of encoded flows.  When a new packet arrives it checks the bloom filter to see if it is a new flow or not. If the flow is new, it encodes the flow ID into several array locations via hashing. The encoding result is a XOR of the existing value in the array with the packet's flow ID. Then it increments the counters saved in those slots. If the flow is not new, it only increments the counters in the hashed cells. 
FlowRadar sends this array frequently to the controller which decodes the flows and sends back network-wide information. Due to FlowRadar's implementation and frequent controller updates, it is likely to find all flows, and especially heavy flows as long as it has enough memory.

\subsection{Information Sharing in the Data Plane. }

There are solutions that require sharing data between switches in the network. Most solutions either share only limited amount of data~\cite{netchain}, or use the control plane for assistance. That is, they communicate information from switches to the controller which processes it and sends relevant information to other switches~\cite{Carpe, flowradar, adaptiveThresholds, MVSketch, IncrementalDeployment}. 
Sharing large amounts of data between switches \emph{completely} in the data plane is more challenging as it requires both communicating the information in the face of errors and processing the information with the limited resources of the data plane.

\textbf{Using Swish to exchange data between switches in the data plane.} Swish~\cite{swish} is a mechanism that allows managing shared state across different switches completely in the data plane. It allows replicating and sending data between switches without having to rely on a central controller.
Swish guarantees that all packets will be delivered successfully without duplicates. Swish can also handle failures, and in order to do so, it must send the data from a static source, so it can re-send the information if necessary.

\subsection{Additional Related Work. }\label{sec:relatedWork}
We mention a few works on network-wide top-k or network-wide heavy hitters, all depending on controller interaction. 
%
Some solutions send only partial information to the controller, either by sampling or by decisions made via metrics such as 
thresholds.
Carpe~\cite{Carpe} combines probabilistic counting on the switches with probabilistic reporting to the central coordinator, guaranteeing that communication costs do not grow proportionally with the number of switches.
An earlier work by Harrison et al.~\cite{adaptiveThresholds}, reduces communication overhead with the controller, by  using local and global per key thresholds to limit reporting. The MV-Sketch~\cite{MVSketch} is a data structure which saves heavy flows candidates and merges the sketches from all switches in the central controller. 
AROMA \cite{Aroma} suggests a method to effectively sample packets and flows using programmable switches using controller analysis while taking into account that packets may potentially go through multiple switches.

%% file: Challenges.tex
\section{Challenges}\label{sec:challenges}

In order to create a global top-k detection algorithm in the data plane, each switch in the network needs to first identify the locally heavy flows, and then merge this information with the heavy flows identified by each of the other switches in the network.
However there are some significant challenges in achieving this in the data plane, which we will now describe. 

\textbf{Memory restrictions.} 
A straightforward solution for merging the network-wide information, would be to have each switch broadcast its local top-k table to all other switches. Each switch would then store all of the tables and then process them all together.
However, since we are using programmable switches, our memory and computational abilities are very limited, such that each switch cannot simultaneously  hold \emph{all} of the received data from \emph{all} the other switches and process all of the data at once. Instead, each switch needs to process the information from other switches, as it is received, in a streaming fashion.

\textbf{Split heavy flows. }
To overcome this issue, let's consider another simple approach, where switches filter smaller flows from top-k tables. Each switch computes a local top-k table, and then sends the table to other switches. When a switch receives a packet (that comes from a local table of another switch) holding a flow ID and counter, it will compare the packet's flow ID to IDs stored in its own table. If they match, it will add the packet's counter to the counter stored with that ID. If the packet's ID is not found in the table and the packet's counter is larger than the smallest counter in the table (or the smallest counter that can be found within the memory access restrictions), the stored ID and count of the smallest item in the table would be replaced with the packet's ID and counter, such that the switch will store the heavier flow of the two.

Yet, this solution is inherently flawed. A heavy flow could be split across the network into small pieces, which may be missed with this solution. When attempting to merge the global information, if the flow isn't saved in the local switch table, each small piece of the flow could be considered as a small flow and be discarded in favor of flows that appear to be heavier. For example, if we had $10$ switches each holding $5$ flows, the first switch holds flows with counters ranging from $200$ to $300$, and does not hold some flow $x$, while every switch except the first holds flow $x$ in its table with counter $100$. When the first switch receives information about flow $x$ from the other switches, each packet will hold a counter of only $100$, which is smaller than counters already saved in the first switch, so it will disregard it even though the global counter of flow $x$ is $900$ which is much heavier than the flows held in the first switch table, 
so, we may miss heavy flows completely.

\textbf{Memory access dependencies. } 
Another issue with the above solution is that, similarly to HashPipe~\cite{HashPipe}, it requires both checking the ID first to find out if the ID is in the table, and checking the counter first to check if its larger.
That is, on the one hand we wish to first compare IDs in order to see if the IDs match and then aggregate their counts, which requires accessing the IDs earlier in the pipeline, before handling the counters. But on the other hand, we might want to remove flows with a low count from the table (a packet with counter 1000 should be able to replace a saved flow with counter 100), but that requires to first compare counts and only then handle the IDs. However, if the IDs come before the counters in the pipeline, once we reach the counters the IDs can't be accessed without re-circulation.

Yet, we note that if we wish to use re-circulation, 
changes might happen to the switch table (by other updates from other switches) \emph{while} a packet recirculates.
For example if a switch holds a small flow $x$ with counter $100$, and a packet from another switch holds a flow $y$ with count $500$. The packet with flow $y$ notes $x$ as the smallest flow in the switch and decides to recirculate to replace it. While it recirculates, another packet arrives from another switch with flow $x$ and count $1000$, this packet finds a matching flow ID and increments the counter in the table to $1100$. When the packet with flow $y$ finishes recirculating and replaces $x$, it now replaces a flow with a larger counter ($x$ with $1100$ compared to $y$ with $500$), but it cannot know that without checking the counter, which will be done after already replacing the stored ID in the table and setting it to $y$. Note that Precision~\cite{PrecisionJournal} does use recirculation for similar purposes, however it does not suffer from this drawback since each increment done is by at most $1$, which means the smallest counter cannot change drastically while a packet recirculates and is likely to remain one of the smallest if not the smallest counter in the table.

\ignore{
\begin{itemize}
    \item \textbf{Memory restrictions.} \red{A straightforward solution for merging the network-wide information, would be to have each switch broadcast its table to all other switches. Each switch would store all of the tables and then process them all together. However, due to the harsh memory restrictions of the data plane we cannot hold all the information from every switch in the network. Thus, this solution would not scale.} we need to process the information from other switches in a streaming fashion.
    \item \textbf{Distributed heavy flows. } Let's consider the following straw man solution: each switch computes local top-k, and sends the information to other switches.
    When a switch processes a flow - if its ID is similar to a saved ID, increase the counter.
    If the packet ID is different but the packet counter is larger, replace the flow information (ID and counter) saved in the switch table to save the largest flow.
    However \textbf{A large flow might be split into smaller flows} which we might miss in this way. Additionally \textbf{we cannot check both ID and counter at the same time} due to programmable switches restrictions. This solution requires checking ID 1st to check for matches but also need to check counter 1st to check if its larger.
    \item \textbf{Memory access dependencies.} 
\end{itemize}
We will explain in depth each of the above challenges

\subsection{Memory restrictions. } Since we are using programmable switches, our memory and computational abilities are very limited, each switch cannot simultaneously  hold \emph{all} of the received data from \emph{all} the other switches and process all of the data at once. Therefore, we cannot collect all of the data, aggregate it and identify top-k flows from the aggregated data. Every stored flow and every packet that holds flow information from other switches (Id and count) only holds part of the global flow data that is saved network-wide and must be processed according to the limited data available in memory at any given time.

\subsection{Considering a straw man solution}
If we try to look at a simple approach, where each switch computes a local top-k table, then sends the table to other switches. When a switch receives a packet (that comes from a local table of another switch) holding flow ID and counter, it will compare the packet ID to stored IDs, if they match it will increment the stored counter by the packet's counter.
If they are different, it will check which counter is larger, if the packet's counter is larger than the stored counter in the table, it will replace the packet's ID and counter with the saved ID and counter in the table so the switch will hold the heavier flow of the two.

This straw man solution has several flaws:
A heavy flow could be split across the network into much smaller pieces. When attempting to merge the global information, if the flow isn't saved in the local switch table, each small piece of the flow could be considered as a small flow and be discarded in favor of flows that appear to be heavier. For example, if we had 5 switches each holding 10 flows and every switch expect the 1st holds in its table the flow $x$ with counter $100$, the 1st switch on the other hand doesn't have $x$ in its local table and holds 10 flows with counters ranging from $200$ to $300$. When the 1st switch receives information about flow $x$ from the other switches, each packet will hold a counter of $100$ only which is smaller than counters already saved in the 1st switch, so he will disregard it even though the global flow counter is $900$ which is much heavier than the flows held in the 1st switch table.
This means we might miss heavy flows completely which is what we are trying to avoid.

Additionally we have to face a similar problem as Hashpipe, since on one hand we wish to 1st compare IDs in order to find matches and then aggregate their counts meaning accessing the IDs must happen earlier in the pipeline compared to handling the counters. But at the same time we might want to remove flows with a low count from the table (a packet with counter 1000 should be able to replace a saved flow with counter 100), but that requires to 1st compare counts and only then handle IDs, which are earlier in the pipeline where we can't reach them without re-circulation.

If we wish to use re-circulation to address this issue (similarly to Precision) we must note that changes might happen to the switch table (from other updates from other switches) while a packet recirculates.
For example if a switch holds a small flow $x$ with counter $100$, and a packet from another switch holds a flow $y$ with count $500$. the packet with flow $y$ notes $x$ as the smallest flow in the switch and decides to recirculate to replace it. While it recirculates another packet arrives from another switch with flow $x$ and count $1000$, this packet finds a matching flow ID and increments the counter to $1100$. When the packet with flow $y$ finishes recirculating and replaces $x$, it is now replacing a flow with a larger counter($x$ with $1100$ compared to $y$ with $500$), but it can't know that without checking the counter which will be done after already replacing the stored ID in the table into $y$ . Precision doesn't suffer from this drawback since each increment done is by at most $1$, which means the smallest counter can't change drastically while a packet recirculates.

}

%% file: Algorithm.tex
\section{The \sysName Framework}\label{sec:system}

In order to find network-wide top-k flows in the data plane while addressing these challenges, we designed \sysNameNoSp. \sysName shares the needed information between switches to create an identical global top-k table in every
switch, and performs all data sharing and processing exclusively in the data plane. 
In this section we present an overview of \sysNameNoSp, followed by a detailed description of the data structures and algorithm. 

\subsection{Overview}

In order to solve the above challenges, \sysName takes a deterministic approach and split the process into three main parts: 

\textbf{1) Creating a local top-k table.} Each switch processes the incoming packets to create a \emph{local} top-k table. 
\textbf{2) Global counter aggregation.} Each switch sends the contents of its local top-k table to all other switches. Each switch then aggregates the counters \emph{only} for flow IDs that are \emph{already found} in the \emph{local} top-k table to find their \emph{global} counters. In this way, the switch finds the global counts of the top-k flows in its \emph{own} table.  
\textbf{3) Consolidation of all tables.} Then, each switch sends the contents of its local top-k table, with the aggregated counts to all other switches. Each switch filters the flows with smaller counters, such that only the flows with the higher counters remain in the table. Once this process is completed, each switch holds a global top-k table.


Information sharing between switches is done completely in the data plane, using the Swish framework~\cite{swish}. That is, the entire process does not require any controller interaction and is performed completely within the data plane. Note that Swish guarantees that all information is sent exactly once in a single Swish pass (assuming no failures).

We address the challenges described above as follows:

    \textbf{Memory constraints. } \sysName doesn't require saving all the information from every switch. Instead, \sysName processes each packet that arrives from each switch 
    in a streaming fashion (i.e., in a single pass).

    \textbf{Handling split flows. } By performing global counter aggregation and then consolidating the tables, \sysName uses the \emph{global} information to calculate the global counter of each flow in order to identify the top-k flows. That is, when \sysName performs the consolidation round between the switches (step 3 above), each packet will hold a flow ID and its \emph{global} counter. Therefore, when \sysName consolidates the tables it considers the entire flow count and therefore it doesn't need to worry about making decisions with only partial information.

    \textbf{Order of accessing IDs and counters. } As we will describe shortly, \sysName uses different table formats for each stage of information sharing (steps 2 and 3). For global counter aggregation, the ID is placed before the counts in the pipeline, and for table consolidation, the counts are placed before the IDs. By doing so, when calculating global counters for local flows, \sysName first accesses the ID to search for matching flows and then accesses the counter to increment it if the IDs match. Afterwards, when receiving packets with global counters, \sysName uses a separate table where the counters are stored before the IDs, so it first compares counters and only then modifies the ID in the table if the stored counter is smaller than the packet's counter.
    \sysName does not use recirculation at all for packets received from other switches and therefore, does not need to address issues that arise from using re-circulation.

\subsection{Components of \sysName}
\begin{figure*}[t]
\centering
    \includegraphics[width=0.99\textwidth]{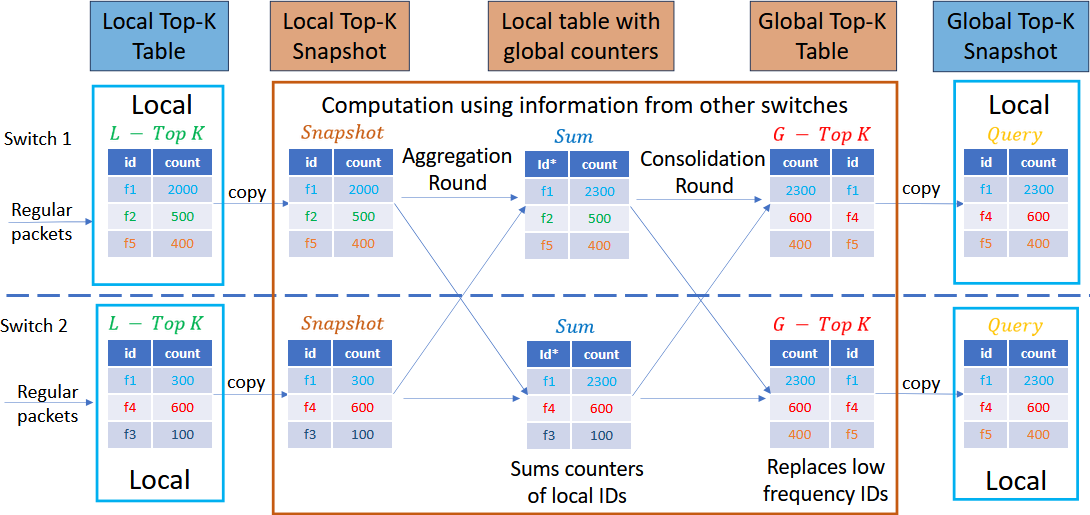}
    \caption{An example of \sysName flow. The Sum table uses the same IDs as the Snapshot table.}
    \label{fig:algorithm overview and flow}
\end{figure*}

As shown in Fig.~\ref{fig:algorithm overview and flow} \sysName makes use of five tables, that are maintained in each switch.  All of the tables use the same hash functions - that way a packet with ID $x$ will get hashed to the same table location in every table. We will now go over each of \sysNameNoSp's tables:

\noindent \textbf{L - Top K: Local top-k table. } \sysName makes use of a local top-k detection algorithm. As packets enter the switch, they are processed by that algorithm to create a local top-k table.

\noindent \textbf{Snapshot: Snapshot of local top-k table. } Once a local top-k table has been created, \sysName sends the contents of this table to all of the other switches in the network. Yet, in order to allow failure handling when sending table information to other switches,  \sysName must maintain a static source of the local top-k table that is used in the global top-k detection process.

\noindent \textbf{Sum: Local top-k table with global counters. } 
    In each switch this table maintains the \emph{global} count of the flows found in its \emph{local} top-k table. That is, this table aggregates the network-wide counts of the flows that were locally heavy and are found in the local top-k table.

\noindent \textbf{G - Top K: Global top-k. } 
    This table maintains the global top-k flows across all flows in the network. That is, using the information about top-k flows received from $Sum$ tables across the entire network, \sysName will determine which flows are globally heavy (whether they were found in the local top-k table or not) and maintains them in this table.

\noindent \textbf{Query: global top-k snapshot. } Once the global top-k flows have been identified, \sysName needs to maintain this information so that information about the global top-k flows may be used during regular packet processing. This is a static table, that is maintained until a new global top-k table is computed.

Note that \sysName takes as parameters: 1) The number of vectors (sub-tables) $d$, i.e. the number of vectors of $(ID, count)$ pairs in \emph{each} \sysName table. Each vector has a hash table, thus $d$ indicates the number of hash-tables used for each one of \sysNameNoSp's tables. Note that $d$ is identical in every \sysName table, and the hash functions used for each vector are identical in all of \sysNameNoSp's tables.  That is at any vector $i$ all \sysName tables share the same hash function. 2) The size $s$ of each vector, i.e. the number of $(ID, count)$ pairs saved in each vector. Note that $s$ is identical in all vectors and in all tables.

\subsection{Detecting Global Top-K}

We will now describe how \sysName uses the above tables to find the global top-k flows.
To better understand the process, we will follow each table in Fig.~\ref{fig:algorithm overview and flow} and how it interacts with the flows passing through it. Note that Fig.~\ref{fig:algorithm overview and flow} has a single vector per table for simplicity, in a normal setting each table will have multiple vectors, each holding pairs of IDs and counters.

\subsubsection{Creating a local top-k table}

\sysName first finds the local top-k flows. 
The Local top-k table maintains the local top-k flows of packets that traverse the switch. That is, every packet that traverses the switch is processed by the local top-k algorithm (e.g. Precision as it can run on programmable switches and creates the same table format that \sysName uses), and the flow ID is inserted into this table accordingly. This table \emph{continuously} maintains the top-k flows. 
In Fig.~\ref{fig:algorithm overview and flow}, switch $1$ has flows $f_1$,$f_2$,$f_5$ in its local top-k table, while switch $2$ has $f_1$,$f_3$,$f_4$ in its local top-k table. 

\subsubsection{Global counter aggregation}

In each switch, \sysName sends the local top-k information to all other switches. We call this the Aggregation round. 
\sysName first copies the contents of the local top-k table into another table called the {\it Snapshot} table. This table will remain \emph{static} throughout the process of finding the global top-k flows. There are two main reasons for maintaining this static table. First, \sysName cannot send the entire table at once to all of the switches. The table is sent in parts and therefore needs to remain static to avoid inconsistencies in the data sent to other switches. Second, packets can get lost on the way to the other switches. Therefore, \sysName requires a static copy to support potential information sharing failures. 
Since the local top-k table is continuously updated with every new packet that enters the switch, it cannot support these operations, and thus a second table is used to maintain information sent to other switches. 
We can see in Fig.~\ref{fig:algorithm overview and flow} that the Snapshot table in each switch is an exact copy of the local table right before the Aggregation round.

In order to sum up the global counters of local flows, \sysName (once again) cannot use the local top-k table that is being modified as new items come in, and it cannot change the snapshot since it may need to resend the information in case of failure. Therefore a third table called Sum is used. 
In Figure \ref{fig:algorithm overview and flow}, Sum table is separated from the Snapshot table for clarity, but to save space, it shares the same exact IDs of the Snapshot table (without changing them) while having its own count values for each flow ID $x$, which will eventually become the global count of $x$.

In order to aggregate the global counts of the local top-k flows,  
\({Sum}\) is initialized to be a copy of \({Snapshot}\). 
When an ID and count from another switch are received, it will check whether the received packet's ID matches an ID that is currently saved in the {\it Snapshot} table. If the ID is in the table, it will add the received counter to the counter currently held in the \({Sum}\) table. For example, Switch 1 has flow $f_1$ with a count of $2000$ in the Snapshot. Thus in the Aggregation round it will send this information to Switch 2. Since Switch 2 also has $f_1$ in the Snapshot table, it will add the count that is received (i.e. 2000) to its own count such that the \({Sum}\) table now has a count of $2300$ for $f_1$. If the ID is not in the table then it does nothing. For example, flow 
$f_4$ is only found in Switch 2. When the information about $f_4$ reaches Switch 1 in the first round, it is disregarded. 

Once all packets from the Aggregation round have been received, \({Sum}\) becomes a static table. This means we can use the table itself to send data to other switches without having to create a copy of it. However, this also means we cannot make changes to this table, as information of \({Sum}\) tables is received from other switches.
We note that at this point \({Sum}\) holds the global counter for each \emph{local} flow as we can see in Fig.~\ref{fig:algorithm overview and flow}.

\subsubsection{Consolidation of all tables.}
 
Once all counters of each of the flows have been aggregated, \sysName is ready to consolidate the tables and determine which flows are the global heavy flows. 
In each switch, \sysName sends the data from the {\it Sum} table to all other switches. We call this the Consolidation round. 

\sysName uses a fourth table for this process: \({G-TopK}\).
As can be seen in Fig.~\ref{fig:algorithm overview and flow}, this table is structured  differently from other tables as the counters are placed before the IDs. This allows \sysName to first compare the counters and only then handle the IDs.
\({G-TopK}\) starts as an empty table, it does not copy the local \({Sum}\) table values, but instead treats them as input packets as if they came from another switch. 
When processing a packet containing information from a \({Sum}\) table, it will use the flow ID to find the location it is hashed to (like it would do in other tables), then it compares the packet's counter with the saved counter in the table. If the packet's counter is larger than the counter in the table, the packet will replace the flow ID and count values with its own.
Since the counter is located before the ID in the pipeline, the packet will first replace the counter and then replace the ID. The packet will maintain the ID and counter that were removed from the table, and will use those in the next vector in the table (using the new ID for the hash function and comparing counters using the new counter). 

Looking at Fig.~\ref{fig:algorithm overview and flow} and for the sake of the example assume that for each switch, \({G-TopK}\) starts as a copy of \({Sum}\) (meaning the values from its own \({Sum}\) table were processed first and populated \({G-TopK}\) and only then it started processing packets from other switches),
we can see that in switch $1$ in the second slot (which $f_4$ and $f_2$ are hashed into), because the counter of flow $f_4$ is larger, it replaces the values of the ID and the counter, so switch $1$ holds ($f_4, 600$) in its ID and counter values.  Note that if the number of vectors $d > 1$ (unlike Fig. \ref{fig:algorithm overview and flow} in which $d = 1$), the packet will continue to the next vector  of the table with the values ($f_2, 500$) (this also means it uses $f_2$ for the hash table in the next vector). Similarly in switch $2$, flow $f_5$ replaces the stored $f_3$ in the third slot they are both hashed to, and the packet continues processing in switch $2$ with ($f_3, 100$).
If the packet's counter is smaller than the viewed counter, it will not change the counter nor the ID, which we can see in Fig.~\ref{fig:algorithm overview and flow} where in switch $1$, $f_3$ does not change $f_5$ and in switch $2$, $f_2$ does not change $f_6$.

In order to get identical global top-k in every switch in the network, \sysName does an additional comparison of the IDs if the counters are identical (we explain this
requirement in section~\ref{sec:networkWide}). If the counters are \emph{identical} 
\sysName will compare the packet's ID with the saved ID. If the packet's ID is larger, it will switch the IDs. 
If the IDs are identical, the packet will stop processing to avoid replacing other counters in the table in different vectors and creating multiple copies of the same heavy flow. As we can see in Fig.~\ref{fig:algorithm overview and flow} both switches hold $f_1$ in their respective \({Sum}\) tables, so we will receive a copy of  $f_1$ (with the same global counter) from each of these switches.  
At the end of the Consolidation round 
 each switch holds the global top-k heavy flows in \({G-TopK}\), in section~\ref{sec:networkWide} we prove that at the end of the Consolidation round all switches hold an identical global top-k flows table.

We note that even though we filter low frequency IDs in \({G-TopK}\), we filter them according to the locations they are hashed to. 
In Fig.~\ref{fig:algorithm overview and flow}, even though $f_2$ has a larger counter compared to $f_5$, they are hashed to different locations and are compared to different flows, which results in $f_5$ remaining in the table in the example while $f_2$ is filtered. However heavy flows are less likely to be filtered with large enough tables with additional vectors.

Lastly, \sysName needs to maintain a table of the Global top-k that may be queried as needed. Thus, we need a fifth table - \({Query}\), which is a snapshot of the \({G-TopK}\) table at the end of a complete \sysName cycle.
 Once \sysName finishes one iteration of creating a global top-k table, it starts another iteration to keep the tables up to date.
 Recall that the information sharing process is based on Swish, and it can identify when an information sharing round ended, at which time it starts the next round. \sysName uses the same process and similarly uses it to decide when to switch from the Aggregation round to the Consolidation round, as well as after finishing the Consolidation round and starting the next iteration of \sysName (Aggregation round of the next iteration).

%% file: networkwide.tex
\section{Identical Global Top-K Tables}
\label{sec:networkWide}

In this section we prove why \sysName guarantees that every switch eventually holds an identical \({G-TopK}\) table. We note that while network-wide identical \({G-TopK}\) tables are not required for finding the global top-k flows, it is a useful guarantee since \sysName competes with algorithms that use a controller, which will get identical results in all switches, and by guaranteeing uniform results network wide, \sysName does not fall short of a controller based solution in that aspect.
We assume the local top-k algorithm creates a \({L-TopK}\) table without duplicate flow IDs (like Precision \cite{PrecisionJournal}). In this section, the only packets we will refer to are \sysName packets, each containing an ({ID},count) 
pair.
Each table is comprised of $d$ vectors (found in different stages). Each vector contains $s$ pairs of IDs and counters, and a corresponding hash table. For example, (\({GTopK^{i,j}.ID}\), \({GTopK^{i,j}.count}\)) are the values saved in \({G-TopK}\) in vector $i$ in index $j$, and \({hash_{i}}\) is the hash table for the i-th vector. 

\noindent Note that \({hash_{i}({GTopK^{i,j}.ID}) = j}\) (\({GTopK^{i,j}.ID}\) is hashed to index $j$ in the i-th vector).

In order to prove this we will first show that after finishing the Aggregation round, each \({Sum}\) table holds the global counters of the flows in its local \({L-TopK}\) table. This means that in the Consolidation round, for any two packets that are sent between any two switches, if the packets have the same $ID$, they also have an identical $count$.
We will then show that after the end of the Consolidation round, for any (\({GTopK^{i,j}.ID}\), \({GTopK^{i,j}.count}\)) pair, that pair could not have been stored in an earlier vector $k<i$ of \({G-TopK}\). This means that there cannot be any duplicate values in \({G-TopK}\) (two pairs of the same $ID, count$) 
Finally, we show that every switch eventually holds an identical \({G-TopK}\) table.

\begin{lemma}\label{lemma:sameSum}
Under the assumption that each packet will be sent and delivered successfully exactly once.
After finishing the Aggregation round, every switch that holds a specific \textit{ID} in its \textit{Sum} table, will hold the same \textit{counter} for that \textit{ID}.
\end{lemma}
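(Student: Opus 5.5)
We show that after the Aggregation round the $Sum$ counter of any stored $ID$ equals one switch-independent quantity. For a flow $x$, let $S_x$ be the set of switches whose $Snapshot$ table contains $x$, and for $m \in S_x$ let $c_m(x)$ be the counter of $x$ in switch $m$'s $Snapshot$. Define
\begin{equation*}
C(x) = \sum_{m \in S_x} c_m(x).
\end{equation*}
We claim that every switch $n$ holding $x$ in its $Sum$ table ends the Aggregation round with $Sum$ counter exactly $C(x)$. Since $C(x)$ depends only on $x$, the lemma follows.

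First we pin down where $x$ sits in switch $n$. The $Sum$ table shares its IDs with $Snapshot$, and $Snapshot$ is a copy of $L-TopK$. By the assumption on the local algorithm, $L-TopK$ has no duplicate IDs. Hence $x$ occupies exactly one cell of $Sum$ in switch $n$, say vector $i$ and index $j$ with $hash_i(x)=j$. That cell is initialized to $c_n(x)$, because $Sum$ starts as a copy of $Snapshot$.

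Next we account for every packet switch $n$ receives. By the delivery assumption, each switch $m \neq n$ sends each $(ID,count)$ pair of its $Snapshot$ to $n$ exactly once. Consider a received packet $(y,c)$. It is hashed with the same hash functions as the tables, and its counter is added at the cell whose stored ID matches $y$; if no cell matches, nothing changes.
\begin{itemize}
\item If $y \neq x$, the packet never matches cell $(i,j)$, whose $Snapshot$ ID is $x$. This ID is static during the round, since $Snapshot$ is never modified.
\item If $y = x$, the packet passes vector $i$, computes $hash_i(x)=j$, finds a match, and adds $c$. It must add at most once. We argue this by uniqueness: no other cell holds $x$, so no other vector can produce a second match.
\end{itemize}
Also, each sending switch $m$ holds $x$ at most once, so it contributes exactly one packet $(x, c_m(x))$ if $m \in S_x$ and none otherwise. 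Summing, the final counter at cell $(i,j)$ is $c_n(x) + \sum_{m \in S_x \setminus \{n\}} c_m(x) = C(x)$.

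The main point needing care is the argument that each relevant packet is added exactly once and irrelevant ones never touch the cell. This rests on three facts: the $Snapshot$ IDs are static, $L-TopK$ has no duplicates, and the hash functions are shared across switches and tables. Without the no-duplicates and exactly-once assumptions, double counting or omissions could make the counters differ between switches.
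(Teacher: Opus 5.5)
Your proof is correct and uses the same counting argument as the paper: every switch holding $x$ in its \textit{Sum} table ends up adding exactly the same collection of \textit{Snapshot} counters for $x$, each used once. You are more careful than the paper in making explicit two ingredients its proof uses only implicitly: the switch's own contribution comes from initializing \textit{Sum} as a copy of \textit{Snapshot}, and the no-duplicate-IDs assumption on the local top-k table is what prevents a counter from being added twice within one switch.
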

\begin{proof}

Every switch will receive each and every pair of \newline (\({LTopK^{i,j}.ID}\), \({LTopK^{i,j}.count}\)) saved in the network exactly once, since each \sysName packet will be sent and delivered successfully exactly once.
This means that for a given $ID$, every switch that holds it in \({Sum}\) will aggregate all of the counters for $ID$ network wide.
Since each of these switches sums up the same counters and uses every counter exactly once the resulting \({Sum^{i,j}.count}\) for \({Sum^{i,j}.ID}\) will be equal for every switch that holds that same ID.
\end{proof}

From lemma~\ref{lemma:sameSum} we learn that in the Consolidation round if a switch receives multiple packets with the same ID, they will all have the exact same global counter.
This also means that when comparing counters, and since we compare IDs if we find matching counters (as described in section~\ref{sec:system}), we will get an equality if and only if the IDs match.

Because of that, and for simplicity, in the following proofs we only refer to comparisons between counters, and assume that equal counters means they also have identical IDs (we consider both ID and counter as one large 'counter' for the comparisons).

We will now use the following lemma to prove that after the Consolidation round ends, each ($GTopK^{i,j}.ID$, $GTopK^{i,j}.count$) cannot be placed in an earlier vector $k<i$.
\begin{lemma}\label{lemma:orderedCounters}
For any (\({GTopK^{i,j}.ID}\), \({GTopK^{i,j}.count}\)), and for any index $j'$ that \({GTopK^{i,j}.ID}\) would have hashed into in the n-th vector, \({GTopK^{i,j}.count}\) would be smaller than \({GTopK^{n,j'}.count}\).
\[\forall 1\leq i\leq d, 1\leq j\leq s,  \forall 1\leq n<i\]
\[j' = hash_n(GTopK^{i,j}.ID)\]
\[GTopK^{i,j}.count < GTopK^{n,j'}.count\]
\end{lemma}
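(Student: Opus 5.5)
We prove the statement by induction on the number of packets processed by \({G-TopK}\) during the Consolidation round. The invariant is exactly the claim of Lemma~\ref{lemma:orderedCounters}: for every occupied cell $(i,j)$ and every $n<i$, with $j'=hash_n(GTopK^{i,j}.ID)$, the cell $(n,j')$ holds a strictly larger counter. Following the convention adopted after Lemma~\ref{lemma:sameSum}, we treat the pair (count, ID) as a single totally ordered key. Equal keys then mean the same flow, and by Lemma~\ref{lemma:sameSum} this is the same global counter. Empty cells are treated as holding key $-\infty$. Initially the table is empty, so the invariant holds vacuously; a cell holding a real flow at level $i>1$ can only arise through the update rule, which we now analyze.

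\textbf{Monotonicity.} A cell's key never decreases. The update rule replaces the stored pair only when the packet's key is strictly larger; otherwise the packet leaves the cell untouched or stops. Hence once $GTopK^{n,j'}.count > x$ holds for some key $x$, it keeps holding. Therefore the inequality for a \emph{fixed} pair of cells $(i,j)$ and $(n,j')$ can only be broken by a write into cell $(i,j)$. It cannot be broken by a write into $(n,j')$, since that only raises the right-hand side.

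\textbf{The inductive step.} Consider one packet entering with key $p_1$ at vector $1$. Let $p_i$ be the key it carries when it reaches vector $i$. It writes into cell $(i,j)$, with $j=hash_i(p_i.ID)$, exactly when $p_i$ exceeds the stored key. In that case it evicts the old key, which becomes $p_{i+1}$. So at every vector, whether or not a write occurs, $p_{i+1}$ is the minimum of $p_i$ and the key the packet saw there, and the carried key is non-increasing along the pipeline.

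Suppose the packet writes $p_i$ into $(i,j)$. We must show that for every $n<i$ the cell $(n,hash_n(p_i.ID))$ now holds a key greater than $p_i$. There are two cases for the origin of $p_i$.

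\emph{Case 1: $p_i$ is the key the packet entered with} (no earlier eviction, or the key was carried unchanged). At each vector $n<i$ the packet visited exactly the cell $hash_n(p_i.ID)$ and did not write there. Thus the stored key was at least $p_i$. It cannot equal $p_i$, since equality means the same flow, and the packet would have stopped rather than continue. So the stored key is strictly larger, and by monotonicity it still is.

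\emph{Case 2: $p_i$ was evicted at some vector $m<i$.} It was evicted from cell $(m,hash_m(p_i.ID))$, which now holds the larger key $p_m$. For $n<m$ the invariant held before the step, with $p_i$ sitting at level $m$, so the cells $(n,hash_n(p_i.ID))$ held keys greater than $p_i$, and they still do by monotonicity. For $m<n<i$ the packet visited cell $hash_n(p_i.ID)$ carrying $p_i$ and did not write, which gives the same strict inequality as in Case 1.

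Finally, we must check that the pairs the packet did not write are unaffected. A write lowers no cell and raises only cells that serve as right-hand sides. A cell whose key was replaced is a new left-hand side, and that is exactly what the two cases above handle.

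\textbf{Main obstacle.} The delicate point is Case 2 at $n=m$ together with strictness. We need that the carried key, after being evicted at $m$, is compared against the cells at levels between $m$ and $i$ using its \emph{own} hash, which the algorithm does, since the packet continues with the new ID. We also need the tie-breaking by ID, combined with Lemma~\ref{lemma:sameSum}, so that ``not strictly larger'' together with ``not the same flow'' forces a strict inequality. This tie-breaking is what yields the strict $<$ in the statement.
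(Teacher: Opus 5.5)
Your proof is correct, and its core is the same as the paper's. You split on whether the stored key is the one the packet entered with or one it picked up by eviction at an earlier vector $m$. You observe that at every vector strictly between $m$ and $i$ the packet carried that key without swapping, so by the ID tie-break and the stop rule the stored key there is strictly larger. And you use the fact that stored keys only ever increase.

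What differs is the induction variable. The paper inducts on the vector index (base case $i=2$, then $k\to k+1$), while you induct on the sequence of packets processed, with the lemma itself as a time invariant. Your version is more rigorous. In its second case the paper applies the hypothesis for vectors $\le k$ to the pair as it sat in vector $m$ \emph{before} being evicted, i.e.\ to an earlier state of the table. An induction on the vector index, phrased about the current table, does not literally provide that. Your ``invariant held before the step'' supplies exactly this. Your monotonicity paragraph also cleanly handles cells the current packet does not touch, which the paper covers only with the informal remark that earlier counters ``can only be replaced by larger counters.''

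The one assumption you leave implicit is that each packet traverses the table atomically. This is harmless: a feed-forward pipeline serves packets in the same order at every stage, so the execution is equivalent to sequential processing.
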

\begin{proof}

The intuition for the proof is when a packet saves a pair of (ID, counter) in the table in some vector $i$, the packet either entered the pipeline with those values and every value it compared to was larger than the packet's counter until vector $i$, or the packet got those values from switching its starting values with the values of an earlier vector $j$, and then its counter was still smaller than counters between vectors $j$ and $i$. As for values before the vector $j$, we reach the same conclusion of how the values reached vector $j$ in the first place, which had to have been through a packet that either started with those values or switched them in an even earlier stage, and we go on until we reach the first vector whose values could only come from packets that entered the pipeline with those values.  \newline
We will prove using induction. 
We will start by proving our claim for the pair (\({GTopK^{2,j}.ID}\), \({GTopK^{2,j}.count}\)) in the second vector for some arbitrary $1\leq j\leq s$. For convenience we will refer to the pair as ($f_{ID}$, $f_{count}$). In order to be saved in the second vector and since \({G-TopK}\) starts empty, ($f_{ID}$, $f_{count}$) had to have been on a packet processed in \({G-TopK}\), and when comparing $f_{count}$ with the saved counter, $f_{count}$ was larger and replaced the value there. Before that there are two options for how ($f_{ID}$, $f_{count}$) became the packet values:
1) The packet's original data (when it was inserted to \({G-TopK}\)) was ($f_{ID}$, $f_{count}$). Since the packet's data did not change till the 2nd vector, it means that in the 1st vector, in the index \({j'=hash_1(f_{ID})}\) the packet was hashed to, \({GTopK^{1,j'}.count} > f_{ID}\).
2) the packet's original values were a different pair (\(Packet.ID\), \(Packet.count\)) while ($f_{ID}$, $f_{count}$) was saved in the 1st vector, and because \(Packet.count>f_{count}\) the values were switched out and when moving on to the second vector the packets values were ($f_{ID}$, $f_{count}$).
In both cases \({GTopK^{2,j}.count} = f_{count} < {GTopK^{1,j'}.count}\), meaning \({GTopK^{2,j}.count}\) ends up smaller than the counter in the 1st vector in the index \({GTopK^{2,j}.ID}\) is hashed to, note that even if the counter in the first vector is replaced later on it can only happen if a larger counter takes its place, which still makes it larger than \({GTopK^{2,j}.count}\).
We assume correctness for the k-th vector: 
\[\forall 1\leq i\leq k, 1\leq j\leq s,  \forall 1\leq n<i\]
\[j' = hash_n(GTopK^{i,j}.ID)\]
\[GTopK^{i,j}.count < GTopK^{n,j'}.count\]
We now look at the pair (\({GTopK^{k+1,j}.ID}\), \({GTopK^{k+1,j}.count}\)) for some arbitrary $1\leq j\leq s$. For convenience we will refer to the pair as ($f_{ID}$, $f_{count}$). In order to be saved in the (k+1)-Th vector, ($f_{ID}$, $f_{count}$) had to have been on a packet processed in \({G-TopK}\), and when comparing $f_{count}$ with the saved counter, $f_{count}$ was larger and replaced the value there. Before that there are 2 options for how ($f_{ID}$, $f_{count}$) became the packet values:
1) The packet's original data (when it was inserted to \({G-TopK}\)) was ($f_{ID}$, $f_{count}$). Since the packet's data did not change till vector $k+1$, it means that in every vector $1\leq n\leq k$, in every index $j'=hash_n(f_{ID})$ \(f_{ID}\) was hashed to, \(f_{count} < {GTopK^{n,j'}.count}\).
2) The packet's original values were a different pair (\(Packet.ID\), \(Packet.count\)) while ($f_{ID}$, $f_{count}$) was saved in a preceding vector $1<=i<=k$ in the table. When the packet reached vector $i$ it has some (\(Packet.ID'\), \(Packet.count'\)) pair (it can be a different pair than the one it started with), where $Packet.count' > f_{count}$  and the values were switched. The packet then continued from vector $i+1$ with the values ($f_{ID}$, $f_{count}$). Then in every vector $i+1\leq n\leq k$, in every index $j'=hash_n(f_{ID})$, if $f_{count}$ was larger than the saved counter \({GTopK^{n,j'}.count}\), the packet would have switched data again and ($f_{ID}$, $f_{count}$) wouldn't have reached vector $k+1$, similarly if the saved counter (which also includes ID for this comparison) was equal, the packet would have turned inactive and it would not reach vector $k+1$. This means that in every vector from $i+1$ to $k$, in every index $f_{ID}$ was hashed to, the saved counter was larger than $f_{count}$. But from induction we also know the same holds for every vector from $1$ to $i$ since $i<=k$. 
while ($f_{ID}$, $f_{count}$) was on its way to vector $k+1$, counters in previous vectors could change in the meantime as well as after it reached vector $k+1$, but they can only be replaced by larger counters, meaning $f_{count}$ would still be smaller than any counter saved in any index $j'$ that $f_{ID}$ is hashed to in vectors $1$ to $k$.
\end{proof}

An immediate result from Lemma \ref{lemma:orderedCounters}, is that \({G-TopK}\) has no duplicate pairs. The same $(ID, count)$ pair cannot appear twice in the same vector $k$ since they are both hashed to the same index in $k$. and they cannot appear in different vectors $k_1\neq k_2$ since it contradicts Lemma \ref{lemma:orderedCounters}.

\begin{theorem}
At the end of \sysNameNoSp's two rounds of information sharing, the resulting \({G-TopK}\) table will be identical in all switches.
\end{theorem}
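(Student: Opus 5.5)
The plan is to show that the final content of $G\text{-}TopK$ is a function only of the \emph{set} of distinct $(ID,count)$ pairs that the switch processes, and not of the order in which they arrive. First I would argue that every switch processes the same input. By the Swish guarantee, in the Consolidation round every switch receives every entry of every other switch's $Sum$ table exactly once, and it also feeds in its own $Sum$ entries. So every switch processes the same multiset $M$ of pairs, possibly in a different order. By Lemma~\ref{lemma:sameSum}, equal IDs in $M$ carry equal counts. Let $S$ be the set of distinct pairs in $M$. As in the text, pairs are compared by the total order on (count, ID), so two pairs are equal only if they are the same pair.

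The core of the proof is an induction on the vector index $i$. For each $i$ I would define a set $S_i\subseteq S$ of pairs that reach vector $i$, with $S_1=S$, and prove the following claim. At the end of the round, slot $j$ of vector $i$ holds $\max\{x\in S_i : hash_i(x.ID)=j\}$, or is empty if no such $x$ exists. Moreover, the set of distinct pairs carried out of vector $i$ into vector $i+1$ is exactly $S_{i+1}=S_i\setminus\{\text{these slot maxima}\}$. Since $S_{i+1}$ is defined from $S_i$ with no reference to arrival order, the whole table is then determined by $S$ alone, and so it is identical in all switches.

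For the slot claim, I would argue as follows. Each visit to a slot leaves the larger of the stored pair and the incoming pair in the slot, and passes the smaller one on. A pair equal to the stored one halts. So the stored value only ever increases. It is therefore the maximum over all pairs that ever visit the slot, as long as every element of $S_i$ hashing to $j$ visits slot $j$ at least once, which the induction hypothesis gives. For the pass-on claim there are two directions. \emph{Every} non-maximal $x\in S_i$ is eventually passed on, either when it is compared against something larger or when it is evicted, and this happens at least once. \emph{No} copy of the slot maximum $m$ is ever passed on, since nothing in $S_i$ exceeds it. A copy of $m$ arriving later meets $m$ itself or a smaller pair, and so either halts or evicts and stays.

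The main obstacle is duplicates. Several copies of the same pair $x$ enter the pipeline, and I must make sure $x$ is not double-counted and does not occupy two slots. For this I would use Lemma~\ref{lemma:orderedCounters} and its corollary that $G\text{-}TopK$ has no duplicate pairs. Passing several copies of $x$ into vector $i+1$ changes nothing about which distinct pairs reach it. At the slot $hash_{i+1}(x.ID)$, the second copy either halts on the stored $x$ or is passed on exactly as the first copy was. So only the set $S_{i+1}$ matters, and not the multiplicities. Pairs leaving vector $d$ are discarded, which again depends only on $S_{d+1}$.

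I would also state the standing assumption that packets traverse each stage atomically and sequentially. This is what makes the ``slot stores the running maximum'' argument valid even when packets are interleaved across stages.
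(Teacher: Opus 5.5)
Your proof is correct, but it is organized differently from the paper's. The paper argues by contradiction between two switches. It takes the first vector $k$ in which $S1$ and $S2$ differ and a slot $j$ where $S1$ holds the larger pair, and observes that $S2$ also processed this pair. It then uses the induction hypothesis and the no-duplicates corollary of Lemma~\ref{lemma:orderedCounters} to rule out that $S2$ keeps the pair in an earlier vector. From this it concludes that the pair reached slot $j$ of vector $k$ in $S2$, and monotonicity of stored values gives the contradiction. You instead compute the final table outright as a canonical function of the distinct input set, peeling off slot maxima via $S_{i+1}=S_i\setminus\{\text{slot maxima of vector } i\}$.

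Your route is timing-free: it uses only per-slot monotonicity and the set of pairs that ever visit each slot. Interleaving across stages and late evictions in earlier vectors therefore need no separate care. The paper handles these by informal remarks, and it simply asserts that the pair ``reaches the $k$-th vector in $S2$''. That step silently relies on the fact that a pair visiting a slot without being its final occupant is forwarded at least once---exactly your pass-on claim.

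Your characterization also yields Lemma~\ref{lemma:orderedCounters} and the no-duplicates property as by-products. A pair stored in vector $i$ lies in every $S_n$ with $n<i$ without being the maximum of its slot there. So your appeal to that lemma for the duplicates issue is unnecessary: multiplicities drop out automatically because only the distinct visitors of a slot matter. For the same reason, your remark that a second copy ``is passed on exactly as the first copy was'' is imprecise (the first copy may have been stored and evicted later), but harmless.

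The paper's version is shorter given the lemmas it has already set up; yours is self-contained and gives an explicit description of the resulting table.
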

\begin{proof}
The intuition for the proof is as follows: We assume the resulting tables are not identical and look at the first different vector $i$ between the two switches. Since the vectors are different, one would have a larger counter in some index. But both switches received the same packets, meaning the switch with the lower counter received a packet with the larger counter, which had to have been placed in an earlier vector $j$ (or it would have replaced the smaller counter in vector $i$ when the packet passed over it).
But in vector $j$ both tables are identical, which means the larger counter is saved in vector $j$ in both switches. This means the switch that has the larger counter in vector $i$, has a duplicate counter in different vectors (since the larger counter is saved in both vectors $i$ and $j$). This contradicts the result of Lemma \ref{lemma:orderedCounters}. \newline
We will prove by Induction.
We start by proving that the 1st vector (of both IDs and counts) is the same in all switches.
If two switches (we will call the first switch $S1$ and the other $S2$) have a different 1st vector - it means one of them (without loss of generality we choose $S1$) has a larger counter in some index $1\leq j\leq s$ in that vector $GTopK_{S1}^{1,j}.count > GTopK_{S2}^{1,j}.count$.
However, if $GTopK_{S1}^{1,j}.count$ was saved in $S1$ it means a packet with that count value also reached $S2$ at some point and that counter had to hash into the same index $j$ in the 1st vector on $S2$ since they use identical hash tables.
If $GTopK_{S1}^{1,j}.count$ is larger - it should have replaced the counter saved there (either while $GTopK_{S2}^{1,j}.count$ was saved there or while a different smaller counter was saved there).
Since it did not replace the saved counter, it means that $GTopK_{S1}^{1,j}.count\leq GTopK_{S2}^{1,j}.count$, which contradicts the assumption $GTopK_{S1}^{1,j}.count >$ $GTopK_{S2}^{1,j}.count$. This proves the 1st vector is identical in every switch in the network.

We assume that the first $k-1$ vectors of \({G-TopK}\) in every switch in the network are identical and will prove it for vector $k$.
We assume the k-th vector is different between two switches ($S1$ and $S2$) - it means one of them (without loss of generality we choose $S1$) has a larger counter in some index $1\leq j\leq s$ in that vector $GTopK_{S1}^{k,j}.count > GTopK_{S2}^{k,j}.count$.
However, if $GTopK_{S1}^{k,j}.count$ was saved in $S1$ it means a packet with that count value also reached $S2$ at some point.
$GTopK_{S1}^{k,j}.count$ couldn't have been placed in an earlier vector $1\leq k' < k$ in $S2$, Since all the previous vectors in $S1$ and $S2$ are identical by induction, so $GTopK_{S1}^{k,j}.count$ would also need to be saved in the same vector $k'$ in $S1$, contradicting the result from Lemma \ref{lemma:orderedCounters} that \({G-TopK}\) has no duplicate pairs. 
This means that $GTopK_{S1}^{k,j}.count$ reaches the k-th vector in $S2$ and it is hashed into the same index $j$ as $S1$ since the hash tables are identical.
If $GTopK_{S1}^{k,j}.count$ is larger than the counter saved there in switch $2$ - it should have replaced it.
if its smaller or equal than the counter there - it contradicts the assumption that $GTopK_{S1}^{k,j}.count$ is the larger value between the two switches in index $j$.
\end{proof}

%% file: Evaluation.tex
\section{Evaluation}\label{sec:evaluation}
We evaluate \sysName for various performance metrics, and compare \sysName to controller based approaches.
We show that \sysName achieves a recall rate of over 95\% of the top-k flows with at most 288KB of memory for \sysNameNoSp's tables.
Furthermore, we implemented \sysName in P4 code for the Intel Tofino Wedge-100 programmable switch~\cite{tofino}, using $\approx$ 2000 lines of code, and show the resource usage of switch resources.

\textbf{System Setup.}
We simulated a network of up to 100 switches, using python and c++ to simulate \sysNameNoSp's operation.  The local top-k algorithm used by \sysName in all of our evaluations is Precision~\cite{PrecisionJournal}, which we have implemented and incorporated into \sysNameNoSp. 
All of \sysNameNoSp's tables were comprised of two vectors ($d=2$) with a varying vector size $s$.
The size of these 
vectors was identical network-wide. The associated hash tables were also identical between different tables in \sysNameNoSp's run (L-top-k table, G-top-k table, etc.).
We chose to use $d=2$ vectors for each of \sysNameNoSp's tables due to results in \cite{PrecisionJournal} showing that Precision does well with only 2 vectors, in addition to memory concerns. We also chose to set $K=128$ in all tests, and vary other parameters.
In each evaluation, we ran each simulation at least 5 times and present here the average results.

\textbf{Datasets.} We used two types of traces: 1) Synthesized traces with different Zipfian distributions, each containing 100M packets and 10M unique flows. The distributions used to generate the traces were $a$=0.6, 0.8 and 1.0. 2) A trace of real traffic, namely the CAIDA UCSD Anonymized Internet Traces - 2018, 2019 \cite{CAIDA}.
In order to simulate large networks 
we merged consecutive CAIDA traces to create larger traces.

\textbf{Splitting the Stream.} In order to challenge \sysName we split the stream between the switches in the following fashion: the top 128 flows were split uniformly across all switches, meaning whenever a packet that belongs to the top-k flows arrives it can go to any switch with an equal chance. The rest of the flows have dedicated switches, meaning all their packets reach the same switch.
Such a split is harder for \sysName since it  would increase the chances of missing the larger flows in some of the local top-k tables, making it harder for \sysName to identify them as heavy flows. 
We note that we have tested additional flow affinities (e.g., 50\% or 80\% flow affinity to a certain switch), and found that \sysName behaves better with higher affinity as the local top-k tables are able to provide a more accurate estimate of the flow counts and the aggregation process is less significant. 

\subsection{\sysName Performance}
Our key observation is that larger network sizes generally have worse results than smaller network sizes, however, given a large enough table size (larger $s$) \sysName performs well. That said, in some cases the larger network actually performs better, we suspect that 
this is due to the fact that the overall saved information across the entire network is larger, 
enabling \sysName to monitor more flows and thus get better results. 

\begin{figure*}[!t]
\centering
    \begin{subfigure}{0.495\textwidth}
    \includegraphics[width=1.0\linewidth]{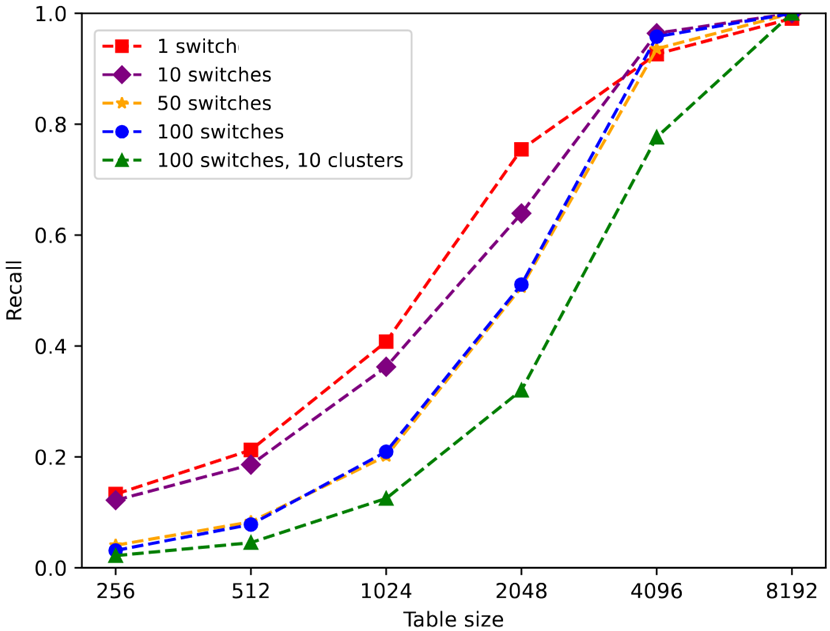} 
    \caption{Synth Zipf 0.6}
    \label{fig:synth_0.6_results}
    \end{subfigure}
    \begin{subfigure}{0.495\textwidth}
    \includegraphics[width=1.0\linewidth]{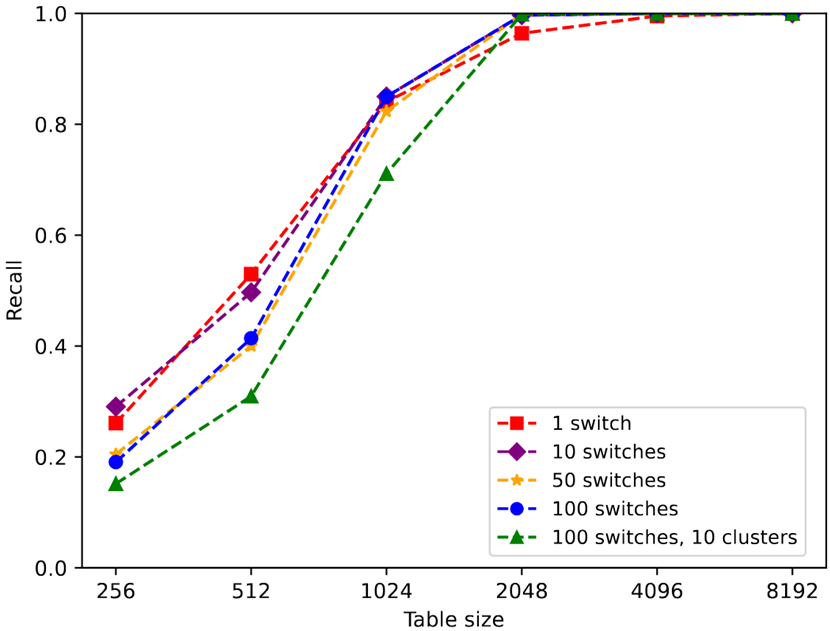} 
    \caption{Synth Zipf 0.8}
    \label{fig:synth_0.8_results}
    \end{subfigure}
    \begin{subfigure}{0.495\textwidth}
    \includegraphics[width=1.0\linewidth]{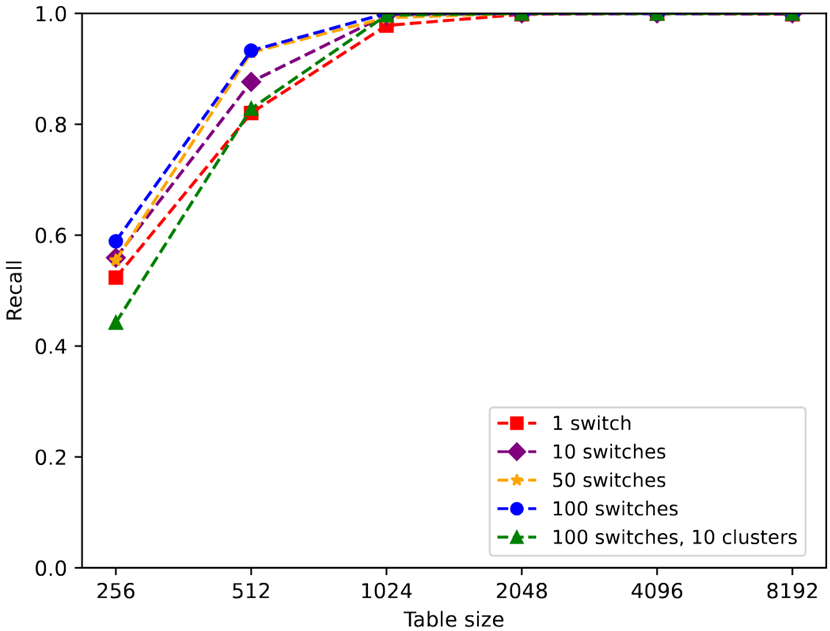} 
    \caption{Synth Zipf 1.0}
    \label{fig:synth_1.0_results}
    \end{subfigure}
    \begin{subfigure}{0.495\textwidth}
    \includegraphics[width=1.0\linewidth]{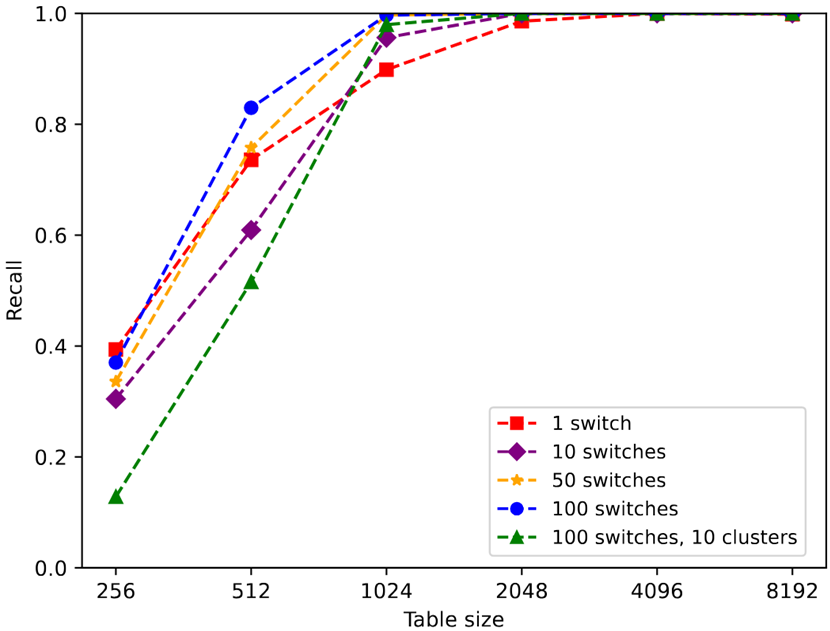}
    \caption{CAIDA}
    \label{fig:caida_results}
    \end{subfigure}
    \begin{subfigure}{0.5\textwidth}
    \includegraphics[width=1.0\linewidth]{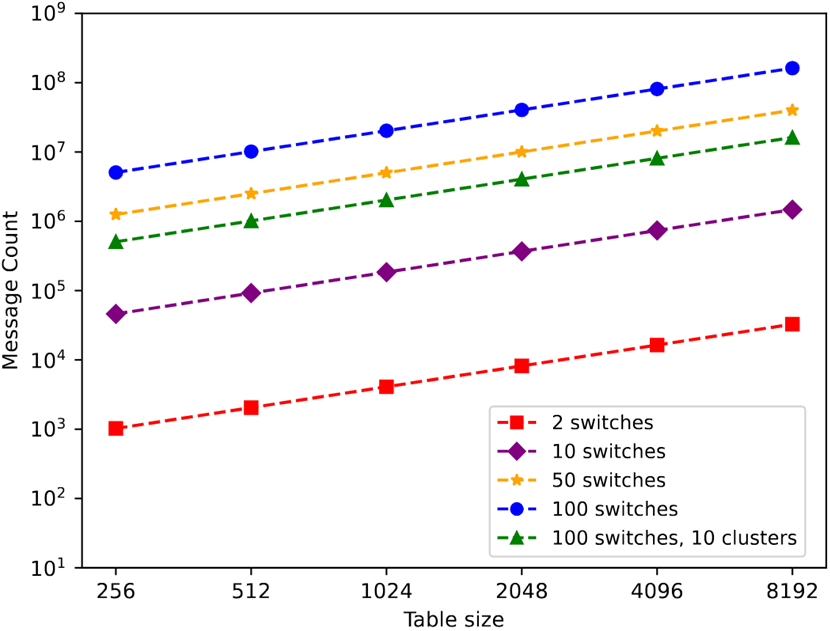} 
    \caption{Message Counts}
    \label{fig:message counts}
    \end{subfigure}

\caption{\sysNameNoSp's recall with different datasets.}
\label{fig:synth and caida results}
\end{figure*}

Fig.~\ref{fig:synth and caida results} shows results on various traces. Note that in results on a single switch, \sysName behaves like Precision~\cite{PrecisionJournal}, since there is no network-wide information to merge.
In comparison, even with a lot more switches, despite harsher flow splits, \sysName performs well. We suspect that is due to how \sysName operates - more switches create a larger effective memory.
Additionally, as we observe results with increasing Zipfian distributions we can see \sysName doesn't need a large table to reach a high recall with Zipfian distribution of 1.0. 

\subsection{Comparing Memory Usage}
We compared \sysName with a basic approach, where each switch maintains a local top-k table, and periodically sends this information to the controller. 
The controller gathers all the information from all of the switches and then computes a network-wide top-k table, which it then sends back to each switch. The controller implementation is not limited in memory or computation and may thus maintain all of the collected information from all of the switches and process it in a non-streaming manner. 
The comparison of how well they identify global top-k is shown in Fig.~\ref{fig:\sysName vs other algorithms}. As shown, \sysName achieves a recall that is very close to that achieved by the controller despite the fact that it functions within the confined resources and processing capabilities of the data plane.

\begin{figure*}[t]
\centering
    \begin{subfigure}{0.495\textwidth}
    \includegraphics[width=1.0\linewidth]{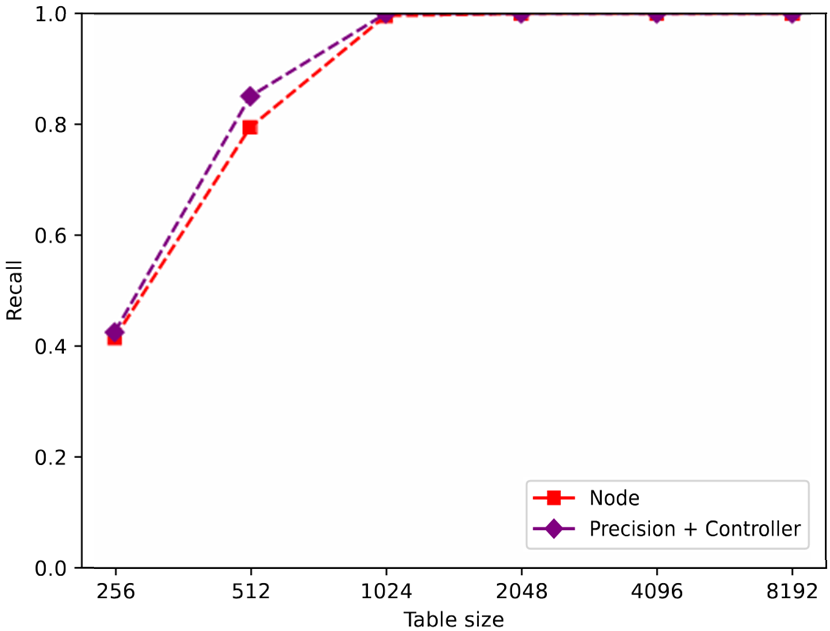} 
    \caption{100 switches}
    \label{fig:compare_different_algo_100_switches}
    \end{subfigure}
    \begin{subfigure}{0.495\textwidth}
    \includegraphics[width=1.0\linewidth]{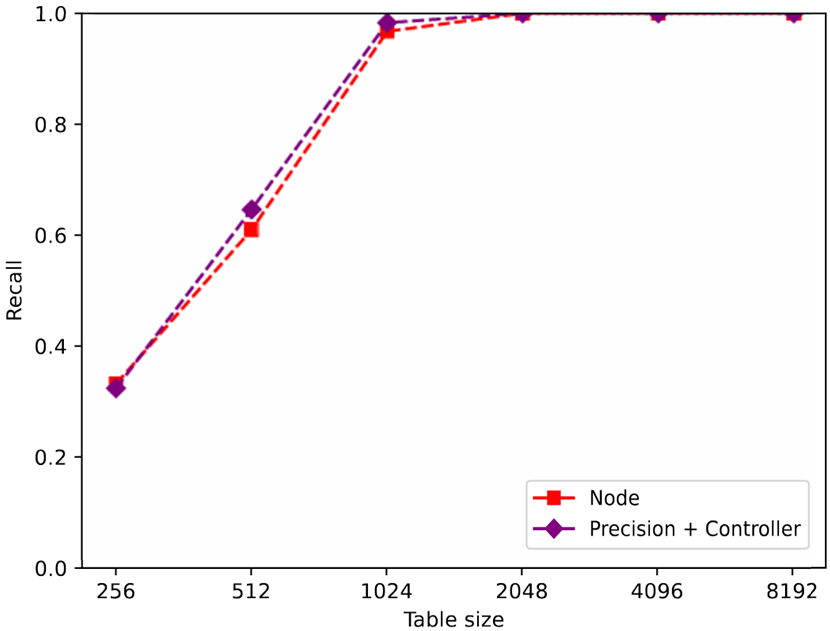} 
    \caption{10 switches}
    \label{fig:compare_different_algo_10_switches}
    \end{subfigure}

\caption{Comparing \sysName to using a local top-k algorithm with a controller.}
\label{fig:\sysName vs other algorithms}
\end{figure*}
We also compare \sysName to the controller based solution of FlowRadar~\cite{flowradar}.
Instead of attempting to emulate FlowRadar and potentially making a weaker version of it,  we compare \sysName to the results shown in the paper~\cite{flowradar}.
FlowRadar 
shows that even with perfect hash tables with no collisions it requires more than 2MB in each switch to support 100k unique flows, 
and over 20MB in each switch to support 1M unique flows.

In our evaluation, the largest table in \sysName uses 8192 cells (two vectors of 4096), with 8-bytes for each cell (4 bytes for ID, 4 for count), totaling 64KB. So 4 tables use 64KB each, and 
\({Sum}\) uses 32KB (since IDs are shared with Snapshot), for a total of 288KB per switch. In the synthetic streams we used 10M different flows and in the merged Caida dataset there are almost 6M different flow IDs. In both, \sysName was able to achieve good results with only 288KB memory in each switch, which is significantly less than that required by FlowRadar.

\subsection{Clustering to reduce communication}
We now look at the communication overhead of \sysName and discuss ways in which this overhead can be reduced. Recall that in each round of \sysName, each switch in the network sends an entire table to every other switch in the network and does so twice. Meaning that as the size of the network grows, the number of messages grows significantly.

\begin{figure*}[!t]
\centering
    \begin{subfigure}{0.495\textwidth}
    \includegraphics[width=1.0\linewidth]{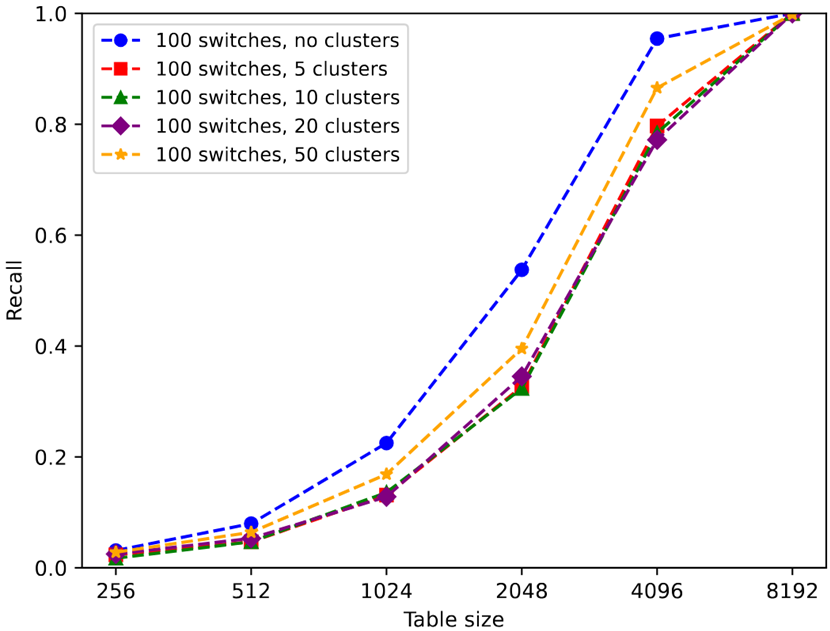} 
    \caption{Synth Zipf 0.6}
    \label{fig:synth_0.6_clustering_results}
    \end{subfigure}
    \begin{subfigure}{0.495\textwidth}
    \includegraphics[width=1.0\linewidth]{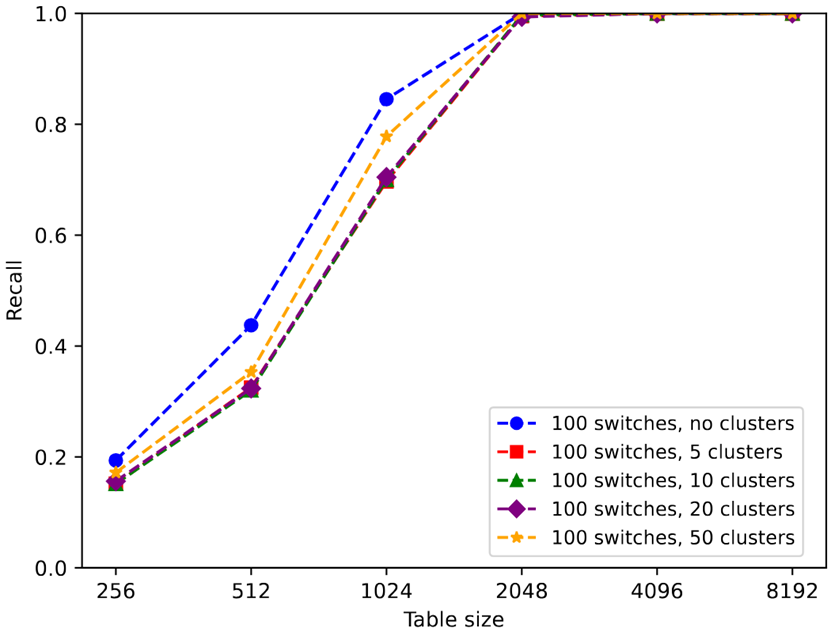} 
    \caption{Synth Zipf 0.8}
    \label{fig:synth_0.8_clustering_results}
    \end{subfigure}
    \begin{subfigure}{0.495\textwidth}
    \includegraphics[width=1.0\linewidth]{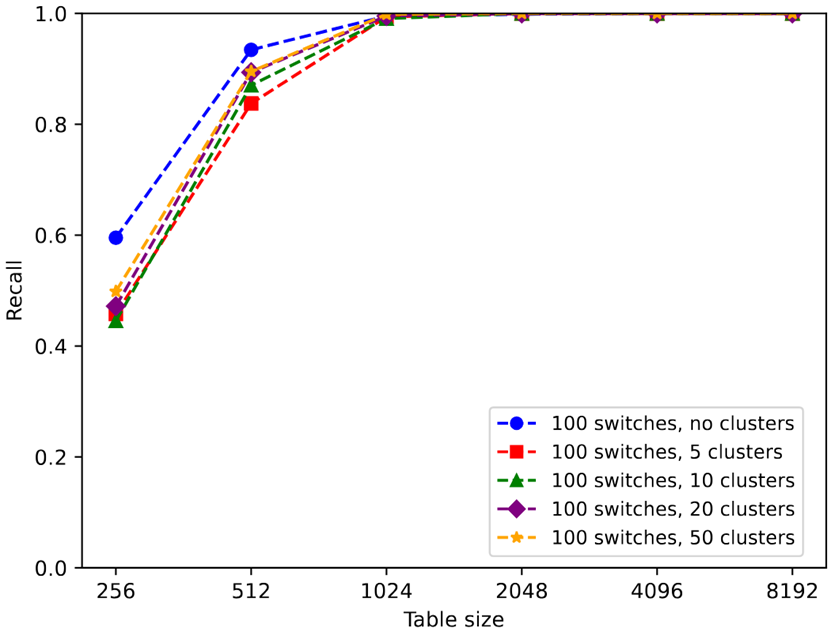} 
    \caption{Synth Zipf 1.0}
    \label{fig:synth_1.0_clustering_results}
    \end{subfigure}
    \begin{subfigure}{0.495\textwidth}
    \includegraphics[width=1.0\linewidth]{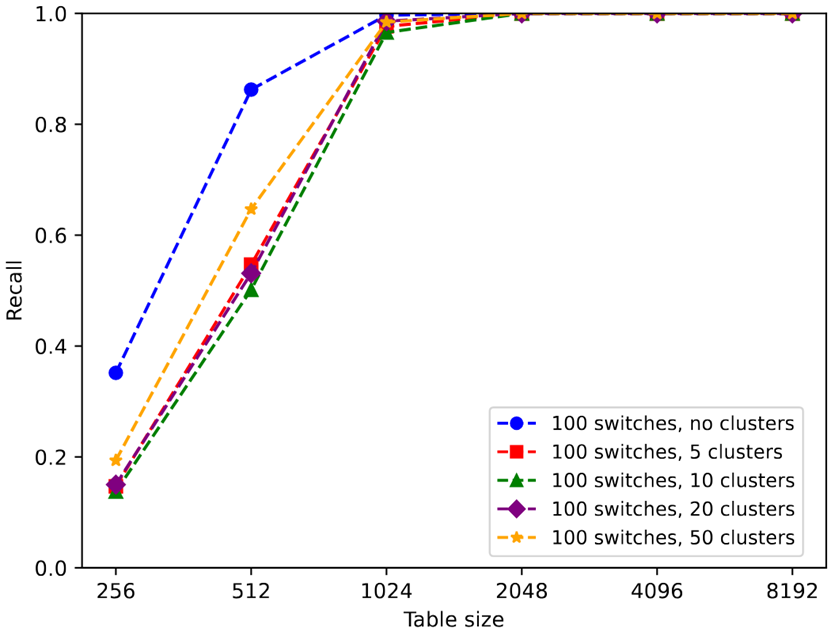}
    \caption{CAIDA}
    \label{fig:caida_clustering_results}
    \end{subfigure}
    \begin{subfigure}{0.495\textwidth}
    \includegraphics[width=1.0\linewidth]{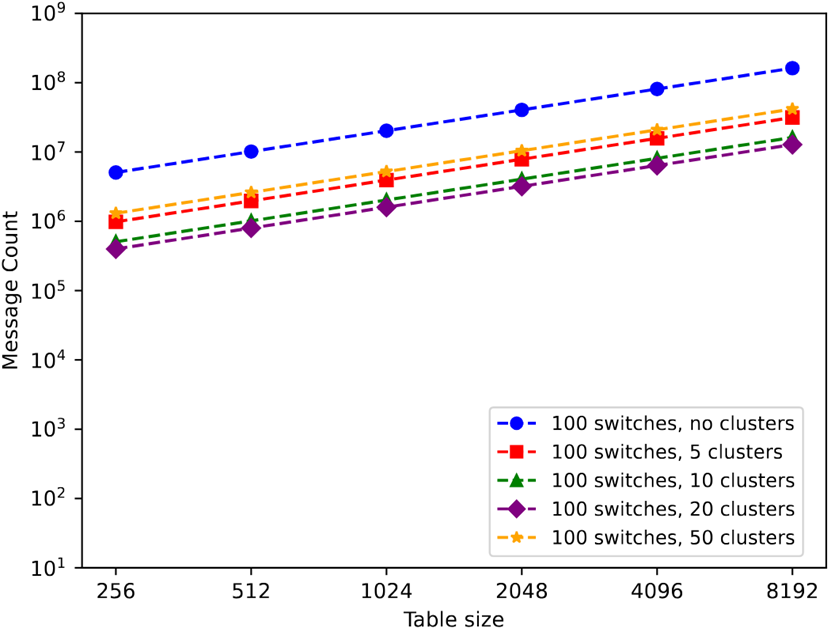}
    \caption{Message Counts}
    \label{fig:message counts clustering}
    \end{subfigure}

\caption{\sysNameNoSp's recall with different datasets when using clustering.}
\label{fig:synth and caida clustering results}
\end{figure*}

As we can see in Fig.~\ref{fig:message counts}, the number of messages increases by \({O(n^2)}\) relative to the network size, a network 10 times larger will have roughly 100~ times more messages.
Specifically in our synthesized test of 100mil packets - in order to get a recall of almost 1 we ended up with more than double the amount of packets, with more than half of the packets being \sysName packets and the rest are regular trace packets. 
Not only does this consume a lot of switch processing time to process all the added packets sent to it by other switches, but this may create substantial communication overhead just from sending the messages depending on the routing between switches.

To address the issue of communication overhead, we designed a simple clustering approach that splits the network into equally (as much as possible) sized clusters.  
Each cluster runs its own independent \sysName algorithm, after which a representative of each cluster will perform an additional round of \sysName with the representatives of the other clusters. 
Each representative's \({G-TopK}\) table acts as the local top-k table for the cluster for the network-wide \sysName. Once the global \sysName finishes, each representative will send the resulting global top-k table to every switch in its cluster.
This means we can reduce the overhead from sending many messages over complex routes, as well as reduce the total number of messages in the network. However, we can expect some drop in performance as well as potentially longer runtime of \sysNameNoSp, since it now has to be run twice (albeit with less messages).
In Fig.~\ref{fig:synth and caida results} we can see 
the results of \sysName with and without clustering. As one might expect, it has lower performance compared to \sysName without clustering, but given a table large enough it competes well and decreases the number of messages in the network by an order of magnitude compared to the same amount of switches without clusters. 

We also compared performances with different amounts of clusters (which also affects cluster size) as shown in Fig.~\ref{fig:synth and caida clustering results}.
As we can see in Fig \ref{fig:message counts clustering}, adding clusters has a large effect on the number of messages circulating in the network between switches for \sysName.

It is important to note that the number of messages required for \sysName does not depend on the amount of traffic that is being processed. The number of messages is a constant (without considering failures) based on the number of switches, clusters and \sysNameNoSp's table size. So while we did double the number of our packets in the synthesized stream. If our trace was 10 times larger - we would have still have used the same number of \sysName messages between the different switches.

\subsection{\sysNameNoSp's Resource Usage}
We implemented \sysName in P4 code for the Intel Tofino Wedge-100 programmable switch~\cite{tofino}.
Our P4 prototype uses $d=2$ vectors in each table with 8192 cells each ($s = 4096$ cells in each vector) to match the simulation's variables. The main resource usage of \sysName is as follows:
    The implementation used all 12 stages available in the switch, that is due to the multiple tables used. The tables also need to be placed in a specific order to enable the functionality of \sysName (i.e Snapshot table in later stages then Local Top-k table). 
Similarly, ALUs are heavily used and Meter ALU usage averages at 64.6\%. 
SRAM usage averaged at 12.1\%, it increases to 28.1\% if we were to use 8 times more cells in each vector ($s = 32768$).  
    Hash Distance Unit averaged at 33.3\%. Note that the same hash functions are used across all of the tables, therefore limiting the amount of hash units needed. 
    
%
Overall, as can be seen Meter ALUs are the most heavily used, yet for other switch resources no more than 34\% is used. It is important to note that this resource usage includes all of the \sysName functionality, including any relevant components of Precision or Swish.  

%% file: Conclusion.tex
\section{Conclusion}\label{sec:conclusion}
In conclusion, we present \sysNameNoSp, an algorithm for efficiently finding the network-wide top-k flows completely in the data plane. 
We proved the correctness of the system analytically and showed its performance through simulation experiments.
We also present a clustering option to reduce \sysNameNoSp's traffic overhead.
In the future we plan to study how clustering can further improve performance and perform evaluation on additional distributed hardware to show the performance improvement achieved by \sysName compared to the centralized approach.\newline
We also note that \sysName can be generalized for any task that faces similar restrictions:
1) uses programmable switches and must adhere to their restrictions which are: feed forward pipeline, limited memory and memory access per pipeline stage and cannot use memory accesses that depend on one another in the same pipeline stage. 2) has two values (ID and counter values are likely, like our own scenario) that require updating (incrementing/decrementing or rewriting/evicting), but different update types require a different dependency order of those values (increment requires ID and then counter while rewriting requires counter and then ID) which cannot be done in the same pipeline pass, but using recirculation cannot help as large changes can occur during packet recirculation.

%% file: Acknowledgments.tex
\section*{Acknowledgments}

\noindent Supported by the Israel Science Foundation no. 980/21.